\documentclass[12pt,a4paper]{article}

\usepackage{amssymb,amsmath,amsthm}
\usepackage[english]{babel}
\usepackage{t1enc}
\usepackage[latin2]{inputenc}
\usepackage{epsfig}

% THEOREM TYPE
\newtheorem*{thm}{Theorem}%[section]
\newtheorem*{cor}{Corollary}
\newtheorem*{defi}{Definition}

\newtheorem*{prop}{Proposition}

\usepackage{indentfirst}
\frenchspacing

\hfuzz2pt
%%%%%%%%%%%%%%%%%%%%%%%%%%%%%%%%%%%%%%%%%%%

\newcommand{\NPc}{{\sc NP}-complete }
\newcommand{\NPcx}{{\sc NP}-complete}
\newcommand{\NAES}{{\sc NAE-SAT }}
\newcommand{\NAESx}{{\sc NAE-SAT}}

%\trnumber{2006-06}
%\headerauthor{D\"om\"ot\"or P\'alv\"olgyi}
%\headertitle{Partitionability to two trees is {\sc NP}-complete}

\begin{document}

\title{Partitionability to two trees is {\sc NP}-complete}
\author{D\"om\"ot\"or P\'alv\"olgyi
\thanks{Dept of Comp. Sci. and Communication Networks Lab, ELTE, Budapest.\newline
Supported by OTKA T67867.}
}
%\ead{dom@cs.elte.hu}
\date{}
\maketitle

\begin{abstract}
\noindent 

We show that {\sc P2T} - the problem of deciding whether the edge set of a simple graph can be partitioned into two trees or not - is \NPcx.

\end{abstract}

\medskip
\noindent

It is a well known that deciding whether the edge set of a graph can be partitioned into $k$ spanning trees or not is in $P$ \cite{E}. Recently Andr\'as Frank asked what we know about partitioning the edge set of a graph into $k$ (not necessarily spanning) trees. One can easily see that whether a simple graph is a tree or not is in {\sc P}. It was shown by Kir\'aly that the problem of deciding if the edge set of a simple graph is the disjoint union of three trees is \NPc by reducing the {\em 3-colorability} problem to it \cite{K}. Now we prove that for two trees the problem is also \NPcx. First we define the problem precisely.

\begin{defi} The input of the decision problem {\sc P2T} is a graph $G=(V,E)$ and the goal is to decide whether there is an $E=E_1\dot\cup E_2$ partitioning of the edge set such that both $E_1$ and $E_2$ form a tree.
\end{defi}
 
\begin{thm} {\sc P2T} is \NPcx.
\end{thm}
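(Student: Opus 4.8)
\emph{Membership in NP.} This direction is immediate: a partition $(E_1,E_2)$ is a polynomial-size certificate, and verifying that a given edge set forms a tree — i.e.\ that it is connected and has exactly one fewer edge than the number of its endpoints — takes linear time. So the entire task is the hardness reduction, and I would reduce from \NAESx, which is \NPcx. One guiding observation would shape the whole construction. Deciding whether $E$ can be partitioned into two \emph{forests} is just testing arboricity $\le 2$, which is polynomial by Nash--Williams, so none of the encoding power can come from acyclicity alone: every bit of hardness must be forced through the \emph{connectivity} requirement, i.e.\ that each part is a \emph{single} tree rather than a forest.

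\emph{The encoding.} I would view a partition as a $2$-colouring of $E$ (colour $1$ for $E_1$, colour $2$ for $E_2$). Acyclicity forbids a monochromatic cycle, so the three edges of any triangle cannot all get the same colour; reading ``colour $=$ truth value'', a triangle is exactly a Not-All-Equal constraint on its three edges, which suggests representing each clause by such a triangle on its three literal-edges. By the observation above, this local gadget cannot be the only ingredient, so it must be fused with connectivity control. Concretely I would build: (i) a \emph{variable gadget}, an edge whose colour encodes the variable's value; (ii) \emph{copy} and \emph{negation} gadgets that propagate a variable's value to all of its occurrences with the required (equal or opposite) colours, enforced through small rigid subgraphs whose colouring is essentially forced; (iii) a \emph{clause gadget} imposing the NAE condition; and (iv) a global \emph{frame} — think of two hubs or a backbone whose colouring is forced — so that in every admissible colouring each of $E_1$ and $E_2$ remains connected, turning the gadget choices into the only free parameters.

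\emph{Correctness.} I would prove that the constructed graph $G$ admits a $2$-tree partition if and only if the \NAESx{} instance is satisfiable. For the forward direction, given a satisfying assignment, colour the gadget edges accordingly; the frame keeps both colour classes connected, each clause triangle is non-monochromatic by the NAE property, and a counting check confirms that each colour class has one fewer edge than its vertex set, hence is a tree. For the converse, from any valid partition one reads off an assignment from the variable-edge colours; the rigid copy and negation gadgets guarantee this assignment is well defined, and a monochromatic clause triangle would be a forbidden cycle, so every clause is satisfied in the NAE sense.

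\emph{The main obstacle.} The crux is precisely the connectivity bookkeeping, exactly because the forest-only version is already easy. The difficulty is to design the frame and gadgets so that keeping both $E_1$ and $E_2$ connected \emph{and} acyclic with the correct edge totals is possible exactly when the formula is NAE-satisfiable: the frame must force connectivity without over-constraining the gadget colours or injecting unwanted cycles, and I must rule out ``cheating'' solutions in which a non-spanning tree repairs an otherwise violated clause or a gadget is coloured inconsistently. Pinning down this equivalence — rather than the easy triangle/NAE correspondence — is where the real work lies.
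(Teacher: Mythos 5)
You have the right skeleton: membership in {\sc NP} is indeed trivial, \NAES is the same source problem the paper uses, and your Nash--Williams observation --- that partitioning into two \emph{forests} is polynomial, so every ounce of hardness must be extracted from the connectivity requirement --- is exactly the correct framing of the difficulty. But from that point on the proposal is a plan rather than a proof, and the plan defers precisely the parts that constitute the theorem. Your items (ii) and (iv) --- copy/negation gadgets ``whose colouring is essentially forced'' and a frame that keeps both colour classes connected ``without over-constraining the gadget colours or injecting unwanted cycles'' --- are the objects whose existence is the whole content of the hardness proof, and you concede in your final paragraph that pinning down the equivalence ``is where the real work lies.'' Two concrete dangers are left unaddressed. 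First, your encoding puts truth values on \emph{edge colours}, so every repeated occurrence of a variable needs a rigid colour-propagation gadget between edges that are far apart in the graph; such a gadget is itself made of edges that must be split between the two trees while preserving global connectivity of both, and nothing shows this rigidity is achievable. Second, the trees need not be spanning --- a vertex may lie in only one tree --- which is exactly the ``cheating'' you name, and no mechanism in the proposal excludes it.

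For comparison, the paper resolves both difficulties with devices your sketch lacks. The connectivity spine and the variable encoding are one and the same object: a chain of $4$-cycles $L_i$ on $t_{i-1}, v_i, t_i, \overline v_i$, glued at the $t_i$'s. Since no tree may contain a cycle, both trees must traverse every $L_i$, one through $v_i$ and the other through $\overline v_i$; truth is read off from \emph{vertex membership} ($x_i$ true iff $v_i\in V(T)$), so negation and copying come for free --- negated occurrences attach to $\overline v_i$, and many clause gadgets may attach to the same literal vertex with no propagation gadget at all. The second device is the ``purple edge'': a pendant $4$-cycle hung on a vertex adjacent to both $u$ and $w$, which both trees are forced to enter, so one arrives via $u$, the other via $w$, and neither can pass through. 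This is what makes the converse direction work: directing both trees away from $\alpha$, the paper proves that no edge leaves a purple or clause subgraph, so clause gadgets are dead ends, each tree must run from $\alpha$ to $\omega$ along the spine, and the assignment is well defined. Your triangle idea is the same seed as the paper's clause gadget --- a cycle of length $2k$ that cannot lie in one tree, hence must be entered by both, which happens iff the clause has a true and a false literal --- but the paper attaches literals through guard vertices $r^j_i$ protected by purple edges rather than using literal edges directly. In short: correct target, correct source problem, correct diagnosis of the crux, but the constructive core and the converse argument --- the actual proof --- are absent.
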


It is obvious that {\sc P2T} belongs to {\sc NP}. To prove its completeness, we will show that the \NAES ({\sc Not-All-Equal SAT }problem) is reducible to {\sc P2T}.\\

The \NAES problem is the following. We are given polynomially many clauses over the variables $x_1,\ldots,x_n$ and we have to decide whether there is an evaluation of the variables such that each clause contains both a true and a false literal. This is called a {\em good} evaluation. Eg., if the formula contains at least one clause of size one (like $x_1$), it does $not$ have a {\em good} evaluation. This problem is well known to be \NPc \cite{GJ}.\\

Now we will construct a graph $G$ from a given clause set $\mathcal C$. We denote its variables by $x_1, \ldots, x_n$. The graph $G$ will consist of two main parts, $L_i$ and $C_j$ type subgraphs. A subgraph $L_i$ corresponds to each variable, while a subgraph $C_j$ corresponds to each clause from $\mathcal C$. Beside the $L_i$'s corresponding to the variables, we also have two extra subgraphs of this type, $L_0$ and $L_{n+1}$. The vertex sets corresponding to the clauses and variables are all disjoint, except for $V(L_i) \cap V(L_{i+1})$, what is a single vertex denoted by $t_i$.\\

A subgraph $L_i$ corresponding to the variable $x_i$ consists of four vertices that form a cycle in the following order: $t_{i-1}, v_i, t_i$ and $\overline{v}_i$. We would like to achieve that one of the trees contains the edges from $t_{i-1}$ through $v_i$ to $t_i$, while the other from $t_{i-1}$ through $\overline{v}_i$ to $t_i$. For simplicity, we denote $t_{-1}$ by $\alpha$ and $t_{n+1}$ by $\omega$. Both trees will have to contain a path from $t_0$ to $t_n$%$\alpha$ to $\omega$
.  The idea is that we want to force one of the trees to go through exactly those $v_i$'s for which $x_i$ is true.\\

Before we start the construction of the subgraphs corresponding to the clauses, we introduce a notation. We say that two vertices $u$ and $w$ are linked with a {\em purple} edge if\\
(1) There is no edge between $u$ and $w$.\\
(2) The smaller connectivity component of $G\setminus\{u,w\}$ (called {\em purple subgraph}) consists of four vertices: $v^{uw}_1, v^{uw}_2, v^{uw}_3$ and $v^{uw}_4$.\\
(3) The $v^{uw}_i$ vertices form a cycle in this order.\\
(4) The $v^{uw}_i$ vertices are not connected to any other vertices, except for $v^{uw}_1$ that is connected to $u$ and $w$. ({\em See Figure 1.})\\
This is a very useful structure because if $E(G)$ is the union of two trees, then they both have to enter this purple subgraph since a tree cannot contain a cycle. So if the vertices are linked with a purple edge and $E(G)=E(T)\dot\cup E(F)$ (where $T$ and $F$ denote the two trees), then it means that $uv^{uw}_1\in E(T)$ and $wv^{uw}_1\in E(F)$ or $uv^{uw}_1\in E(F)$ and $wv^{uw}_1\in E(T)$.
%$u\in V(T)$ and $w\in V(F)$ or $u\in V(F)$ and $w\in V(T)$ (or maybe both).

\vskip 0.2cm

\epsfxsize=4truecm
\centerline{\epsffile{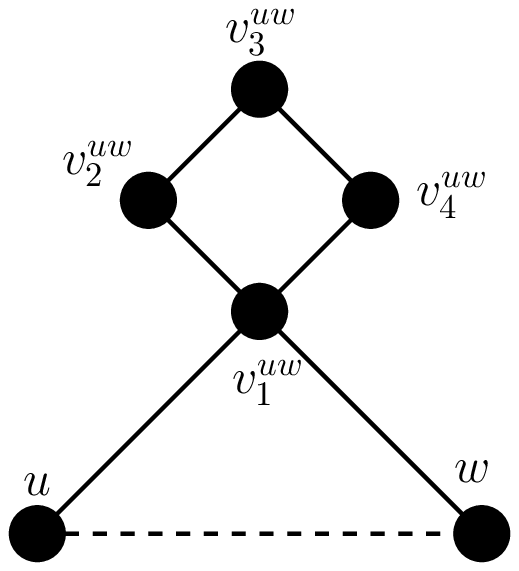}}

\vskip 0.2cm

\centerline{{\bf Figure 1.} A $purple$ edge.}

\vskip 0.4cm

A subgraph $C_j$ corresponding to the $j$th clause consists of $3k$ vertices where $k$ is the size of the $j$th clause whose literals are denoted by $l^j_1, \ldots, l^j_k$. A cycle of length $2k$ is formed by the following vertices in this order: $p^j_1,$ $q^j_1,$ $p^j_2,$ $q^j_2, \ldots, p^j_k,$ $q^j_k$. The other $k$ vertices are denoted by $r^j_1,\ldots, r^j_k$. The vertex $r^j_i$ is always connected to $p^j_i$ and it is also connected to $v_m$ if $l^j_i$ is $x_m$ or to $\overline v_m$ if $l^j_i$ is $\overline x_m$. Furthermore, there is a purple edge between $r^j_i$ and $q^j_i$. This will ensure that a tree ``entering'' the clause subgraph through an $r^j_i$, cannot ``leave'' this subgraph. ({\em See Figure 2. for a graph with one clause. The dashed edges mean purple edges.})\\

\vskip 0.2cm

\epsfxsize=7.6truecm
\centerline{\epsffile{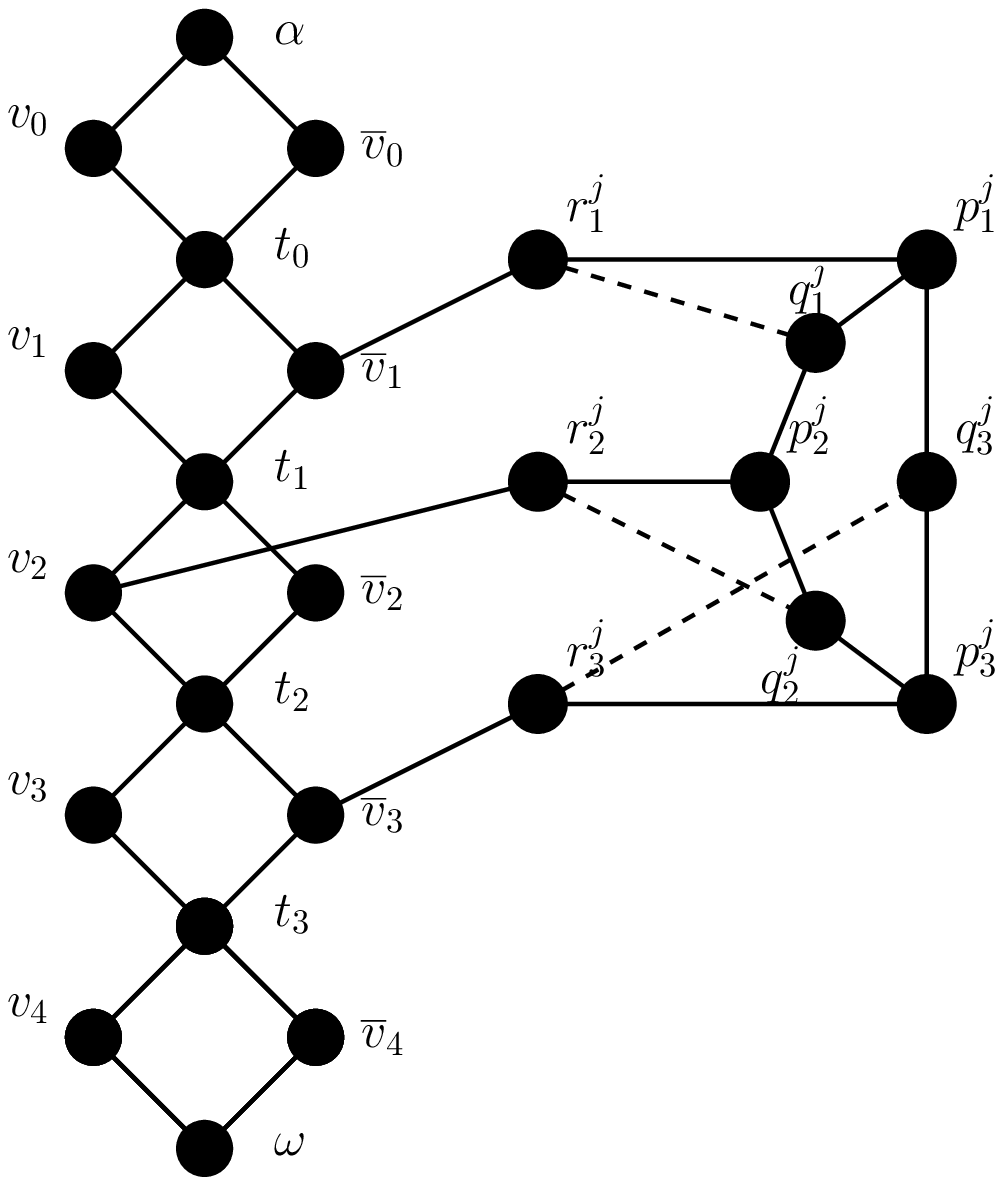}}

\vskip 0.2cm

\centerline{{\bf Figure 2.} The graph for the single clause $\overline x_1 \vee x_2 \vee \overline x_3$.}
%with the $j$th clause which is ($\overline x_1 \vee x_2 \vee \overline x_3$).}

\vskip 0.4cm

The construction is finished, now we have to prove it's correctness. The easier part is to show that if our \NAES problem has a good evaluation, then we can partition the edges into two trees, $T$ and $F$. First let us fix a good evaluation. Let the tree $T$ contain the path from $\alpha$ to $\omega$ through the $v_i$'s for true $x_i$'s and through the $\overline v_i$'s for false $x_i$'s (consider the non-existing $x_0$ and $x_{n+1}$ true). Similarly $F$ trails from $\alpha$ to $\omega$ through the $\overline v_i$'s for true $x_i$'s and through the $v_i$'s for false $x_i$'s. So each $v_i$ and each $\overline v_i$ belongs to exactly one of the trees. If a tree contains $v_i$ (or $\overline v_i$), let it also contain the $v_ir^j_m$ (or $\overline v_ir^j_m$) and $r^j_mp^j_m$ edges if $x_i$ (or $\overline x_i$) is in the $j$th clause. This way both trees enter each clause subgraph since the evaluation satisfied our \NAES problem. Let the edges $p^j_iq^j_i$ and $q^j_ip^j_{i+1}$ belong to the tree that does $not$ contain $r^j_i$. This guarantees that we can enter the purple subgraph belonging to $r^j_i$ and $q^j_i$ by both trees. It can be also easily seen that the edges of $T$ (and of $F$) form a tree and every edge is assigned to one of them. So we are done with this part.\\
%This guarantees that we have no problem with the purple edges and one can easily see that the trees remain connected.

To prove the other part, let us suppose that $E(G)=T\dot\cup F$ for two trees $T$ and $F$. We know that $t_0 \in V(T)$ and $t_0 \in V(F)$, because the $L_0$ subgraph is a cycle. We can suppose that $v_0 \in V(T)$, $\overline v_0 \in V(F)$, $\alpha \in V(T)$ and $\alpha \in V(F)$. We can similarly suppose $\omega\in V(T)$ and $\omega\in V(F)$. Let us direct all the edges of the trees away from $\alpha$.

\begin{prop} There are no edges coming out of the purple subgraphs.
\end{prop}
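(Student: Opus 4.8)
The plan is to analyze one purple subgraph at a time and show that its two boundary edges are both oriented inward. Fix an arbitrary purple edge, say between $u$ and $w$, and abbreviate $v_i := v^{uw}_i$, so that the purple subgraph is the $4$-cycle $v_1 v_2 v_3 v_4$ and the only edges joining it to the rest of $G$ are $uv_1$ and $wv_1$. By the remark following the definition of a purple edge we may assume $uv_1 \in E(T)$ and $wv_1 \in E(F)$. These two edges are the only candidates for an edge ``coming out'' of the subgraph, so it will suffice to prove that in its own tree each of them is oriented towards $v_1$.

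The structural fact I would lean on is that $v_1$ is a cut vertex of $G$ separating $\{v_2,v_3,v_4\}$ from everything else: deleting $v_1$ leaves the path $v_2 v_3 v_4$ with no edge to $V(G)\setminus\{v_1,v_2,v_3,v_4\}$. I use this as follows. Since $\alpha$ lies outside the purple subgraph and $\alpha\in V(T)$, while the edges of $T$ are directed away from $\alpha$, the vertex $v_1$ has a parent $p$ in $T$ and the edge $pv_1$ is oriented $p\to v_1$. Now $p$ is a $T$-neighbour of $v_1$; the $G$-neighbours of $v_1$ are exactly $u,w,v_2,v_4$, and since $wv_1\in E(F)$ we conclude $p\in\{u,v_2,v_4\}$.

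It remains to exclude $p\in\{v_2,v_4\}$. Suppose $p=v_2$ (the case $p=v_4$ being symmetric). Then the unique $\alpha$--$v_1$ path in $T$ ends with the step $v_2\to v_1$, so it already visits $v_2$ earlier. But $v_2$ lies inside the purple subgraph, and by the cut-vertex property every path in $G$ from $\alpha$ to $v_2$ must pass through $v_1$; hence $v_1$ would occur twice on this simple path, a contradiction. Therefore $p=u$, i.e. $uv_1$ is oriented $u\to v_1$, into the purple subgraph. Repeating the identical argument with $T$ replaced by $F$ and $u$ replaced by $w$ shows that $wv_1$ is oriented $w\to v_1$ as well. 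Since $uv_1$ and $wv_1$ are the only boundary edges and both point inward, no edge comes out of this purple subgraph, and as the purple edge was arbitrary the proposition follows.

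The only genuine content here is the cut-vertex observation together with the acyclicity of the trees; the orientation bookkeeping is routine once one notices that the single portal vertex $v_1$ forces each tree to enter the cycle ``from the front.'' Accordingly, I expect the main (modest) obstacle to be simply making the cut-vertex separation precise and confirming that $v_1$ is indeed the unique vertex through which the interior cycle communicates with the rest of $G$.
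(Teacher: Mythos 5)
Your proof is correct and takes essentially the same route as the paper: the internal $4$-cycle forces the two boundary edges $uv_1^{uw}$, $wv_1^{uw}$ into different trees, and connectivity of each tree to the external root $\alpha$ through the single portal vertex $v_1^{uw}$ forces both edges to be directed inward. Your parent-of-$v_1^{uw}$ and cut-vertex bookkeeping is just a more explicit rendering of the paper's one-line observation that each tree must enter the purple subgraph through one of the only two connecting edges.
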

\begin{proof} Both trees have to enter each purple subgraph since a tree cannot contain a cycle and since there are only two edges connecting a purple subgraph to the rest of the graph, both of them must be directed toward the purple subgraph.
\end{proof}

We may conclude that the trees cannot ``go through'' purple edges.

\begin{prop} There are no edges coming out of the clause subgraphs.
\end{prop}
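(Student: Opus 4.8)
The plan is to show that every edge joining a clause subgraph $C_j$ to the rest of $G$ is directed \emph{into} $C_j$. First I would identify these boundary edges: by the previous proposition the purple edges are dead-ends, so the only edges leaving $C_j$ are the $k$ edges of the form $r^j_i v_m$ (or $r^j_i \overline v_m$) that join each $r^j_i$ to the variable vertex of its literal. Thus, after discarding the purple subgraphs as sinks, the only relevant structure inside $C_j$ is the $2k$-cycle $p^j_1 q^j_1 \cdots p^j_k q^j_k$ together with the pendant edges $r^j_i p^j_i$, and the task reduces to proving that none of these boundary edges points outward.

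I would argue by contradiction, assuming that some boundary edge, say $r^j_i v_m$, is directed out of $C_j$ in one of the trees, say $T$. Since the purple edge at $r^j_i$ points into its purple subgraph and $v_m$ is now a child of $r^j_i$, the remaining edge $r^j_i p^j_i$ must be the parent edge, directed $p^j_i \to r^j_i$; thus $T$ reaches $r^j_i$ through $p^j_i$. I would then trace the tree backwards from $p^j_i$. At every $q$-vertex the only non-cycle edge is a purple dead-end, so its parent lies on the $2k$-cycle, while at every $p$-vertex the parent is either the adjacent cycle vertex or its own $r$-vertex. Since a tree contains no cycle, the backward walk cannot run all the way around, so it must leave the cycle at some $p^j_{i'}$ whose parent is $r^j_{i'}$; there the only available parent of $r^j_{i'}$ is an \emph{incoming} boundary edge $v_{m'} \to r^j_{i'}$. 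Hence $T$ contains a directed path entering $C_j$ at the variable vertex $v_{m'}$, running along an arc of the $2k$-cycle, and leaving at $v_m$.

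The hard part will be turning this enter-and-exit detour into an actual contradiction. The difficulty is that, considered in isolation, a single clause can apparently accommodate such a detour: the other tree $F$ may take the complementary arc of the $2k$-cycle and enter each purple subgraph from the opposite side, so no purely \emph{local} obstruction appears. The contradiction must therefore be extracted from the global structure, where both trees are rooted at $\alpha$ and must reach $\omega$ through the chain of variable gadgets $L_0,\ldots,L_{n+1}$ glued along the vertices $t_i$. I would exploit the uniqueness of paths in a tree: the detour supplies a $T$-path between the two variable vertices $v_{m'}$ and $v_m$ that avoids the chain, while the connectivity forced on $T$ by having to span from $\alpha$ to $\omega$ through the gadgets should provide a second, chain-route between them; together these would close a cycle in $T$, contradicting that $T$ is a tree. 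Making this last step precise — in particular, pinning down exactly which gadget edges are forced into $T$ so that the two routes genuinely form a cycle — is the main obstacle, and is where the acyclicity of \emph{both} trees, rather than of $T$ alone, has to be invoked.
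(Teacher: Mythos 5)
Your opening matches the paper's setup exactly: assume $r^j_iv_m\in T$ directed outward, deduce $p^j_ir^j_i\in T$ directed into $r^j_i$, and trace the $T$-path backwards along the $2k$-cycle. But your pivotal claim --- that the resulting detour is locally consistent because ``$F$ may take the complementary arc \ldots so no purely local obstruction appears'' --- is false, and this is where the genuine gap lies. You use only half of the purple-edge mechanism: that purple subgraphs are dead ends, so a parent edge cannot come out of one. The other half, which the paper leans on, is that \emph{both} trees must enter every purple subgraph, so for each pair $\{r^j_s,q^j_s\}$ the tree $F$ must be able to reach $r^j_s$ or $q^j_s$ by an edge not already in $T$. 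At one end of your putative detour this fails outright: if the $T$-path runs $v_{m'}\to r^j_{i'}\to p^j_{i'}\to q^j_{i'}\to p^j_{i'+1}\to\cdots$, then all four non-purple edges incident to $r^j_{i'}$ and $q^j_{i'}$ lie in $T$ and are directed along the path, so $F$ can reach neither gateway vertex of their common purple subgraph and hence cannot enter it --- contradiction. (If the path circles the other way, the same happens at the exit pair $\{r^j_i,q^j_i\}$, since $p^j_{i+1}\to q^j_i\to p^j_i\to r^j_i\to v_m$ are then all $T$-edges.) The paper organizes exactly this observation as a step-by-step induction around the cycle: each traversed $q^j_s$ with both its cycle edges in $T$ forces $q^j_s\notin V(F)$, hence $r^j_s\in V(F)$, hence $F$ needs one of $r^j_s$'s two non-purple edges, hence the backward walk can never exit at a $p$-vertex via its $r$-vertex. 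So the branch you take (``the walk must leave the cycle'') is precisely the branch the paper refutes; the walk wraps all the way around and closes a cycle in $T$.

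Consequently your proposed global argument is not only unnecessary but, as sketched, unlikely to be repairable: nothing forces $T$ to contain a second, chain-route between $v_{m'}$ and $v_m$. For instance, on the $4$-cycle $L_m$ the tree $F$ may take the three edges $t_{m-1}v_m$, $v_mt_m$, $t_m\overline v_m$, leaving $T$ only $\overline v_mt_{m-1}$, so that $T$ meets $v_m$ solely through the clause edge; then the unique $T$-path between $v_{m'}$ and $v_m$ is the detour itself and no cycle in $T$ arises. Since you explicitly leave this ``main obstacle'' open, the proposal does not establish the proposition; the missing idea is the local one above, that $F$'s forced entry into each purple subgraph seals off any exit through the $r$-vertices.
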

\begin{proof} Let us suppose that the edge from $r^j_i$ going to some $v_m$ (or $\overline v_m$) is directed away from $r^j_i$ and is in $T$. This implies $p^j_ir^j_i \in T$ as well because $T$ cannot enter $r^j_i$ through the purple edge. But because $r^j_i\notin V(F)$, therefore $q^j_i \in V(F)$ since they are linked with a purple edge. This shows $q^j_ip^j_i \notin T$, so $T$ must have entered $p^j_i$ from $p^j_{i-1}$ through $q^j_{i-1}$. But then $q^j_{i-1}\notin V(F)$, so $r^j_{i-1}\in V(F)$. This means $T$ entered $p^j_{i-1}$ from $p^j_{i-2}$. And we can go on so until we get back to $p^j_i$, what gives a contradiction.
\end{proof}

So now we know that the clauses are dead ends as well as the purple subgraphs. Since $T$ and $F$ trail from $\alpha$ to $\omega$, each $v_i$ (and $\overline v_i$) must be contained in exactly one of them. So we can define $x_i$ to be true if and only if $v_i \in V(T)$. Now the only property left to show is that the literals in the clauses are not equal. But if they were equal in the $j$th clause, then the $C_j$ subgraph corresponding to this clause would be entered by only one of the trees and hence that tree would contain a cycle, contradiction. So we have shown that each tree partition yields a proper evaluation. This finishes the proof of the theorem. $\Box$\\

Now we prove an upper bound on the maximum degree of the graph that we constructed. The degree of every vertex, except the $v_i$'s and $\overline v_i$'s, is at most four. A $v_i$ (or $\overline v_i$) has degree equal to two plus the number of occurrences of $x_i$ (or $\overline x_i$) in the clause set. But a \NAES problem is easily reducible to a \NAESx-(2;2) problem (meaning that each literal can occur at most twice). If a literal $l$ would occur in at least three clauses, then let us execute the following operation until we have at most two of each literal. Replace $(C_1' \vee l), (C_2' \vee l), (C_3' \vee l)$ with $(l \vee \overline z), (C_1' \vee l), (C_2' \vee z), (C_3' \vee z)$ where $z$ is a new variable.

\begin{cor} The decision of whether the edge set of a simple graph is the disjoint union of two trees or not, is \NPc even for graphs with maximum degree four.
\end{cor}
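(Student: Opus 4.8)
The plan is to derive the corollary directly from the theorem, exploiting the degree count carried out just above: the only vertices of the constructed graph whose degree can exceed four are the $v_i$ and $\overline v_i$, and each such vertex has degree exactly $2$ plus the number of occurrences of the corresponding literal. Hence it suffices to run the theorem's construction not on an arbitrary \NAESx instance but on one in which every literal occurs at most twice, i.e. a \NAESx-(2;2) instance. For such an input each $v_i$ and $\overline v_i$ has degree at most $2+2=4$, and since all remaining vertices already have degree at most four, the whole graph has maximum degree four. So the real content to verify is that \NAESx reduces in polynomial time to \NAESx-(2;2) while preserving the existence of a good evaluation.

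For that reduction I would use the replacement operation stated above: whenever a literal $l$ still occurs in (at least) three clauses $(C_1'\vee l),(C_2'\vee l),(C_3'\vee l)$, replace these three by the four clauses $(l\vee\overline z),(C_1'\vee l),(C_2'\vee z),(C_3'\vee z)$ with $z$ a fresh variable. The crux of the correctness argument is the observation that the size-two clause $(l\vee\overline z)$ admits a good (not-all-equal) evaluation only when $l$ and $\overline z$ receive different truth values, i.e. only when $z$ receives exactly the same truth value as $l$. Granting this forced equality $z=l$, the clauses $(C_2'\vee z)$ and $(C_3'\vee z)$ carry the same true and false literals as $(C_2'\vee l)$ and $(C_3'\vee l)$. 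One direction is then immediate: any good evaluation of the original clauses extends to the new ones by setting $z:=l$; conversely any good evaluation of the new clauses already forces $z=l$, and therefore restricts to a good evaluation of the three original clauses.

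Finally I would settle the bookkeeping. Each application turns three occurrences of $l$ into two, strictly decreasing the occurrence count of $l$, while the new variable $z$ occurs twice positively (in $(C_2'\vee z)$ and $(C_3'\vee z)$) and once negatively (in $(l\vee\overline z)$), so it never needs further processing, and the occurrence counts of the literals inside the $C_i'$ are unchanged. Thus iterating the operation terminates after polynomially many steps, yields a \NAESx-(2;2) instance of polynomial size, and preserves good-evaluability; composing this preprocessing with the theorem's reduction gives a polynomial-time reduction from \NAESx to {\sc P2T} whose output has maximum degree four. Since {\sc P2T} is in {\sc NP}, the restricted problem is \NPcx. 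The main obstacle is purely the correctness of the gadget — nailing down that the not-all-equal constraint on $(l\vee\overline z)$ rigidly couples $z$ to $l$ — since everything else, namely the degree computation and the polynomial size blow-up, is routine.
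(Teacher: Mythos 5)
Your proposal is correct and follows essentially the same route as the paper: bound the degrees of all vertices except the $v_i$'s and $\overline v_i$'s by four, then preprocess the \NAES instance into a \NAESx-(2;2) instance via exactly the replacement $(C_1' \vee l), (C_2' \vee l), (C_3' \vee l) \mapsto (l \vee \overline z), (C_1' \vee l), (C_2' \vee z), (C_3' \vee z)$. The paper leaves the gadget's correctness implicit, and your verification — that the not-all-equal constraint on the size-two clause $(l \vee \overline z)$ forces $z=l$, so good evaluations transfer in both directions, with termination because each application strictly decreases the occurrence count of $l$ while $z$ needs no further processing — is the right (and routine) filling-in of that detail.
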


\noindent
{\bf Acknowledgment.} I would like to thank Zoltán Király for discussions and the anonymous referee for his useful comments.

\end{document}